\DeclareMathOperator{\im}{im}
\DeclareMathOperator{\col}{col}
\DeclareMathOperator{\diag}{diag}
\newtheorem{theorem}{Theorem}}
\newtheorem{lemma}{Lemma}}
\newtheorem{remark}{Remark}}
\newtheorem{assumption}{Assumption}}
\newcommand{\1}{\mathds{1}}
\newcommand{\R}{\mathbb{R}}
\newcommand{\w}{\omega}
\newcommand{\half}{\frac{1}{2}}
\newcommand{\la}{\mathcal{L}}
\newcommand{\ac}{\rm ac}
\newcommand{\dc}{\rm dc}
\newcommand{\G}{\mathcal{G} }
\newcommand{\V}{\mathcal{V} }
\newcommand{\E}{\mathcal{E} }
\newcommand{\N}{\mathcal{N} }
\tikzstyle{vertex}=[circle, shading = ball, ball color = white!100!white, minimum size = 15pt, draw, inner sep=0pt]  
\newcommand{\vertex}{\node[vertex]}           
\newcommand{\weight}[1]{{\footnotesize $\mathit{#1}$}}
\tikzset{
	LabelStyle/.style = { rectangle, rounded corners, draw,
		minimum width = 2em, fill = yellow!50,
		text = red, font = \bfseries },
	VertexStyle/.append style = { inner sep=5pt,
		font = \Large\bfseries},
	EdgeStyle/.append style = {->, bend left} }
\newcommand{\rem}[1]{{\color{red}\sout{#1}}}     
\renewcommand{\rem}[1]{}                      
\begin{document}
	\title{\Large \bf Stability and Frequency Regulation of Inverters with Capacitive Inertia}
	
	\author{Pooya~Monshizadeh,\; Claudio De Persis,\; Tjerk Stegink,\; Nima Monshizadeh,\; and Arjan van der Schaft%
		\thanks{Pooya Monshizadeh and Arjan van der Schaft are with the Johann Bernoulli Institute for Mathematics and Computer Science, University of Groningen, 9700 AK, the Netherlands,
			{\tt\small p.monshizadeh@rug.nl, a.j.van.der.schaft@rug.nl}}%
		\thanks{Claudio De Persis and Tjerk Stegink are with the Electronics, Power and Energy Conversion Group, University of Groningen, 9747 AG, the Netherlands,
			{\tt\small  c.de.persis@rug.nl, t.w.stegink@rug.nl}}%
		\thanks{Nima Monshizadeh is with the Electrical Engineering Division, University of Cambridge, CB3 0FA, United Kingdom,
			{\tt\small n.monshizadeh@eng.cam.ac.uk}}%
		\thanks{This work is supported by the STW Perspectief program "Robust Design of Cyber-physical Systems" under the auspices of the project "Energy Autonomous Smart Microgrids".}
	}
	\maketitle
	\begin{abstract}
		In this paper, we address the problem of stability and frequency regulation of a recently proposed inverter. In this type of inverter, the DC-side capacitor emulates the inertia of a synchronous generator. First, we remodel the dynamics from the electrical power perspective. Second, using this model, we show that the system is stable if connected to a constant power load, and the frequency can be regulated by a suitable choice of the controller. Next, and as the main focus of this paper, we analyze the stability of a network of these inverters, and show that frequency regulation can be achieved by using an appropriate controller design. Finally, a numerical example is provided which illustrates the effectiveness of the method.
	\end{abstract}
\section{Introduction}
Along with the emergence of the renewable energy sources in power networks, and consequently the increasing usage of power converters, new issues and concerns regarding stability of the grid have arisen. Recently, the problem of low inertia of inverter dominated systems has been extensively investigated. In classical electrical grids, synchronous generators dominated the power source types in the network. These machines possess a massive rotational part, rotating at the same frequency as that of the generated electrical sinusoidal voltage. The kinetic energy of such rotation takes the role of an energy reservoir. When an abrupt increase or decrease occurs in the load, the kinetic energy of the synchronous machine is injected into, or absorbed from the network, respectively.
In conventional power converters, the absence of this reservoir jeopardizes the stability of the network, and leads to new frequency instability issues in power systems \cite{Tielens2016,Ulbig2014,Bevrani2014}. Inverters possess fast frequency dynamics and the traditional control strategies are too slow to prevent large frequency deviations and their consequences \cite{Ulbig2014}. In particular, in networks with low inertia, the rate of change of frequency (ROCOF) may be large enough to activate the load-shedding switches of a power network, even with a small power imbalance \cite{Dreidy2017}. As a remedy to this problem, the concept of \textit{Virtual Inertia} has been introduced and various methods have been proposed so that the inverters emulate the behavior of synchronous generators \cite{vi-4,vi-3,vi-2,vi-1,vi0,vi1,vi2,vi3,vi4}. 
\begin{figure}
	\centering
	\includegraphics[width=8.5cm]{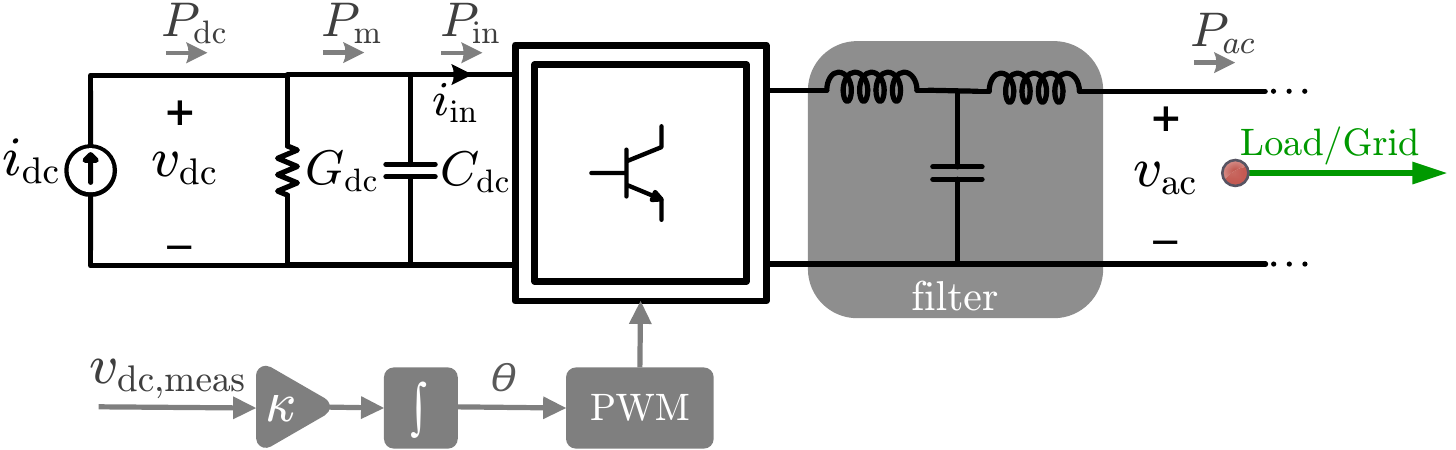}
	\caption{Schematic of an inverter with capacitive inertia (ICI)}
	\label{fig:circuit}
	\vspace{-0.3cm}
\end{figure}

Although a better performance of the inverters results with this emulation, the virtual inertia cannot react instantaneously. This is due to the fact that the AC measurements play a major role in mimicking the inertia \cite{Dreidy2017}, and hence the inevitable delay in these measurements slows down the emulating behavior. Therefore, as an alternative, methods to provide an instantaneous physical inertia have been proposed. More specifically in \cite{Vandoorn2011,Vandoorn2012,Arani2013}, the energy stored in the DC-side capacitor of the inverter is employed as a replacement of the kinetic energy stored in the rotor of a synchronous generator. The DC-side capacitor is an inherent element in most inverters. We refer to these devices as \textit{Inverters with Capacitive Inertia} (ICI) throughout the paper. Recently, a promising and detailed nonlinear model of such devices is provided in \cite{Jouini2016}, where the generated frequency is proposed to be proportional to the measured voltage of the DC-side capacitor (see Figure \ref{fig:circuit}). However in \cite{Jouini2016}, the stability of the inverter, connected to a single load or a network, was not investigated. Note that, as previously mentioned, the motivation for emulating inertia is to alleviate the stability problems of low-inertia networks dominated by inverters. 

In this paper, we remodel the ICI dynamics in \cite{Jouini2016} from a power perspective, in order to ease the stability analysis of these devices in several scenarios. In Section \ref{s:SingleICI}, the case of a single inverter connected to a constant power load will be investigated. A primary controller is provided, which guarantees stability of the system. Next, it is shown that the frequency can be regulated to its nominal value by a secondary controller. In Section \ref{s:Network}, stability of a network of ICIs is investigated and a distributed controller is proposed to regulate the frequencies to the desired value. Finally, a numerical example illustrates the effectiveness of the method.

\textbf{Notation  }For $i \in \{1, 2, . . . , n\}$, by $\col(a_i)$ we denote the column
vector $[a_1\; a_2 \; \cdots \; a_n]^T$. For a given vector $a \in \R^n$, the diagonal matrix $\diag\{a_1,a_2,\cdots , a_n\}$ is
denoted in short by $[a]$. The function $\sin a$ represents the element-wise sine function, i.e. $\sin a=\col(\sin (a_i))$. The symbol $\1$ denotes the vector of ones with an appropriate dimension, and $I_n$ is the identity matrix of size $n$.
\section{Single Inverter with Capacitive Inertia}\label{s:SingleICI}
In this section, we first explain briefly how a single inverter is modeled in \cite{Jouini2016}, and next we reconfigure the model from the electrical power perspective. Finally the control method is elaborated.
\subsection{ICI Model in \cite{Jouini2016}}
Figure \ref{fig:circuit} depicts the schematic of an ICI, which is based on the averaged model of a three-phase converter (For the sake of clarity, the electrical circuit of one phase is shown.). 
The electrical part, shown in black, consists of a controllable current source $i_{\dc}$, a resistor with the conductance $G_{\dc}$, and a capacitor $C_{\dc}$ in the DC-side. The switching block in the middle converts the DC current to an alternating current. This conversion is carried out via a pulse width modulation (PWM) unit which provides on/off signals to the switching block according to a given phase angle input $\theta$. A low-pass $LCL$ filter in the AC-side eliminates the high frequency harmonics of the output signal. This process generates a sinusoidal voltage $v_{\ac}$ with the phase angle $\theta$. 

In a synchronous generator, when the power demand is more than the mechanical input power, the lacking amount of energy is taken from the kinetic energy of the rotor $(\half J\w^2)$, hence the angular velocity of the rotor decreases and the frequency of the output voltage drops. 
Similarly, in power converters with a DC-side capacitor, the extra power demand is released from the energy $\half C_{\dc}v^2_{\dc}$ stored in the capacitor. However, contrary to the inertia of a synchronous generator, if the voltage of the DC-side drops, this will not be visible in the output frequency at the AC-side. In order to remedy this, in \cite{Jouini2016}, to emulate the inertial behavior, the frequency $\w=\dot{\theta}$ of the output voltage $v_{\ac}$ is designed to be proportional to $v_{\dc}$. This is achieved via an integral action over the measured voltage $v_{\dc}$ with the integral coefficient $\kappa$, and feeding it as the PWM signal to the switching block, i.e. $\dot{\theta}=\kappa v_{\dc}$ (see Figure \ref{fig:circuit}). Hence
\begin{align}\label{e:omega}
\w&=\kappa v_{\dc}\;,
\end{align}
where a reasonable choice for the integral coefficient is $\kappa= \frac{\w^*}{v_{\dc}^*}$, with $\w^*\in\R$ denoting the desired frequency (angular velocity corresponding to $\SI{50}{\hertz}$ or $\SI{60}{\hertz}$). Furthermore, using Kirchhoff's current law in the DC side, we have
\begin{align}\label{e:KCL}
C_{\dc}\dot{v}_{\dc}=-G_{\dc} v_{\dc}-i_{\rm in}+i_{\dc}\;.
\end{align}
Combining \eqref{e:omega} and \eqref{e:KCL} we obtain the model \cite{Jouini2016}
\begin{align}\label{e:pre-model}
J\dot{\w}=-D \w-\frac{i_{\rm in}}{\kappa}+\frac{i_{\dc}}{\kappa}\;,
\end{align}
where $J=\frac{C_{\dc}}{\kappa^2}$ and $D=\frac{G_{\dc}}{\kappa^2}$. 
\medskip
\subsection{ICI Model from the Electrical Power Perspective}
We can rewrite the system \eqref{e:pre-model} as
\begin{align*}
J\dot{\w}&=-D\w-\frac{P_{in}}{\kappa v_{\dc}}+\frac{i_{\dc}}{\kappa}\;,
\end{align*}
where $P_{in}=v_{\dc}i_{\rm in}$ is the electrical power that is injected into the switching block. 
Assuming that no power is dissipated in the switching block and the $LCL$ filter (see Figure \ref{fig:circuit}), i.e. $P_{in}\simeq P_{\ac}$, we obtain
\begin{align}\label{e:model}
J\dot{\w}&=-D\w-\frac{P_{\ac}}{\w}+u\;,
\end{align}
where $u=\kappa^{-1}i_{\dc}$ is treated as the control input. 
\subsection{Primary Control}\label{ss:PC}
Consider an ICI modeled by \eqref{e:model} connected to a constant
power load $P_{\ac}=P_\ell$. To provide a primary control, we
propose the control input 
\begin{align}\label{e:uS}
u=D\w^*+\w^{-1}P_{m}\;,
\end{align}
where $P_m \in \R_{>0}$ will be designed later. This design is inspired by the following remark.
\begin{remark}
	Around the
	nominal frequency $\w=\w^*$, the term $P_m$ in \eqref{e:uS} represents the power injection behind the capacitor $C_{\dc}$ (see Figure 	\ref{fig:circuit}). To see this, notice that we can rewrite \eqref{e:uS} as
	$$ \frac{v^*_{\dc}}{\w^*}i_{\dc}=G_{\dc}\frac{v_{\dc}^{*2}}{{\w^*}}+\frac{P_m}{\w^*}
	\;,$$
	where we used $u=\kappa^{-1}i_{\dc}$, $D=\kappa^{-2}G_{\dc}$, and $\kappa=\frac{\w^*}{v^*_{\dc}}$. Hence we have
	$$
	P_m=P^*_{\dc}-G_{\dc}v^{*2}_{\dc}\;.
	$$
	Note that the first term is the nominal DC power, and the second term is the power dissipated in the DC-side resistor in the nominal frequency.
\end{remark}
Since $\w=\kappa v_{\dc}$, where $v_{\dc}$ is a DC value measured for generating the PWM signal, no additional measurement is required to implement this controller.
In this section, we assume a constant
$P_m={P_\ell}^*$, where ${P_\ell}^*>0$ is an estimate of the nominal load.
Now, the model
\eqref{e:model} can be rewritten as
\begin{align}\label{e:model0}
J\dot{\w}&=-D(\w-\w^*)+\frac{{P_\ell}^*-P_{\ell}}{\w}\;.
\end{align}
The model \eqref{e:model0} indicates a droop-like behavior. That is, the frequency will drop if the power extracted by the load is larger than the nominal power, and will increase otherwise. In fact, the dynamics \eqref{e:model0} resembles that of a synchronous generator modeled with an \textit{improved swing equation} \cite{Zhou2009},\cite{Pooya2016}, with inertia $J=\frac{C_{\dc}}{\kappa^2}$, damping coefficient $D=\frac{G_{\dc}}{\kappa^2}$, and mechanical input power ${P_\ell}^*$. Assume that the maximum power mismatch (lack of power) $P_\ell-{P_\ell}^*$ is such that
 \begin{equation*}
 \Delta:=\w^{*2}-4\frac{P_\ell-{P_\ell}^*}{D}>0\;.
 \end{equation*}
Then the dynamics \eqref{e:model0} has the following two equilibria
 \begin{align}\label{equib}
 & {\w}_s=\half(\w^*+\sqrt{\Delta}) ,\quad
 {\w}_u=\half(\w^*-\sqrt{\Delta}) 
 \; \text{.} 
 \end{align}
The system is stable around the equilibrium point $\w= \w_s$ (see Theorem 1 in \cite{Pooya2016} for a proof and more details). A secondary controller is needed to eliminate the static deviation of $ \w_s$ from the nominal frequency $\w^*$.
\begin{remark}
Aiming at a larger damping coefficient $(D)$ requires a larger $G_{\dc}$ and consequently more power loss $(v_{\dc}^2G_{\dc})$ in the DC-side resistor. Therefore, in the case that a larger damping term $D(\w-\w^*)$ in \eqref{e:model0} is desired, a proportional controller term can be added to the control input. In particular, let $u=D\w^*+\w^{-1}P_m+u_p$, where $u_p=\tilde{D}(\w-\w^*)$ for some $\tilde{D}>0$. In this case, the damping term in \eqref{e:model0} modifies to $(D+\tilde{D})(\w-\w^*)$. 
\end{remark}
\subsection{Secondary Control}
Aiming at the frequency regulation of the system \eqref{e:model}, we propose the controller as
\begin{equation}
\begin{aligned}
\dot \chi&=-\w^{-1}(\w-\w^*)\\
u&=D \w^*+\w^{-1}\chi  \;.
\end{aligned}\label{e:model0c}
\end{equation}
Note that here, compared to the primary controller, the term $P_m$ in \eqref{e:uS} is not a constant, but a state variable integrating the frequency deviation. This controller regulates the frequency to the nominal $\w^*$ in the steady state of the system \eqref{e:model}  (see Remark \ref{r:Network2Single} later on). 

\section{Network of Inverters with Capacitive Inertia}\label{s:Network}
In this section, we investigate the stability and the frequency regulation in a network of ICIs. 
\subsection{Model}
Consider an inverter-based network, where each bus is connected to an inverter and a local constant power load $P_\ell$. 
The topology of the grid is represented by
a connected undirected graph $\G(\V, \E)$, with node set $\V$, and edge set $\E$, given by a set of unordered pairs $\{i, j\}$ of distinct vertices $i$ and $j$. Let $n=|\V|$ and $m=|\E|$. By
assigning an arbitrary orientation to the edges, the incidence matrix $B\in\R^{m\times n}$ is defined element-wise as
$B_{i\ell} = 1$, if node $i$ is the sink of the $\ell$th edge, $B_{i\ell} =-1$, if $i$ is the source of the $\ell$th edge and $B_{i\ell} = 0$ otherwise.
Due to the inductive output impedance of the inverters, the lines are assumed to be dominantly inductive \cite{Schiffer2014, Pooya2017}, i.e. two nodes $\{i,j\}\in\E$ are connected by a nonzero inductance. The set of neighbors of the $i$th node is denoted by $\N_i =\{j \in \V \ |\ \{i, j\}\in \E\}$. 

Calculation of the active power transferred via a power line is in general cumbersome, and complicates the network stability analysis. To remove this obstacle, we take advantage of phasor approximations. 
The relative phase angles are
denoted in short by $\theta_{ij} := \theta_i-\theta_j \;, \{i, j\} \in \E$. 
Now let $\gamma_k:=\frac{|V_i||V_j|}{X_{ij}}\,,k \sim \{i,j\}$, where $X_{ij}$ represents the reactance of the line connecting nodes $i$ and $j$, and $|V_i|$ denotes the magnitude of the voltage at node $i$ and is assumed to be constant. Then the active power transferred via the inductor between nodes $i$ and $j$ is calculated as $$P_{ij}=\gamma_k \sin \theta_{ij}\;\;,\;k \sim \{i,j\}\;.$$ 
Hence, the injected active power by the inverter at each node $P_{{\ac}_i}$ is given by
\begin{align}\label{e:power}
P_{{\ac}_i} = P_{\ell_i}+\sum_{\substack{j\in\N_i\\ k \sim \{i,j\}}} \gamma_k \sin \theta_{ij}\;\;,
\end{align}
where $P_{\ell_i}$ denotes the local load connected to node $i$. Note that the phasor approximation is only exploited to write the expression of the active power above. 

For every node $i$ we have
\begin{equation*}
\begin{aligned}
\dot \theta_i&=\w_i\\
J_i  \dot \w_i&=u_i-\w_i^{-1}P_{\ac_i}-D_i\w_i\;.
\end{aligned}
\end{equation*}
With a little abuse of notation, using \eqref{e:power}, the network can be written in vector form as
\begin{equation}\label{e:delta}
\begin{aligned}
\dot \theta&=\w\\
J  \dot \w&=u-[\w]^{-1}(P_\ell+B\Gamma\sin(B^T\theta))-D\w \;,
\end{aligned}
\end{equation}
where $\theta=\col(\theta_i)$, 
$\w=\col(\w_i)$,
$J=\diag\{J_1,\cdots,J_n\}$, $u=\col(u_i)$,
$P_\ell=\col(P_{\ell_i})$, $D=\diag\{D_1,\cdots,D_n\}$, and $\Gamma=\diag\{\gamma_1,\cdots,\gamma_m\}$,  with indices indicating the node/edge numbers.

Note that if $(\theta, \omega)$ is a solution to \eqref{e:delta} for given $u$ and $P_\ell$, then $(\theta+\1 \alpha, \omega)$ is also a solution to \eqref{e:delta} for any constant $\alpha\in \R$. To exclude this rotational invariance, it is convenient to introduce a different set of coordinates, representing the phase angle differences, given by $\eta:=B^T\theta$. Then the model \eqref{e:delta} modifies to
\begin{equation}\label{e:network}
\begin{aligned}
\dot \eta&=B^T\w\\
J  \dot \w&=u-[\w]^{-1}(P_\ell+B\Gamma\sin \eta)-D\w\;.
\end{aligned}
\end{equation}

\subsection{Primary Control}
The goal of primary control is to design a proportional controller $u=k(\omega)$ such that frequency variables converge to the same value corresponding to a stable equilibrium of the system. To this end, analogous to the case of a single ICI, and with a little abuse of notation, we propose the control input 
\begin{align}\label{e:u}
u=D\1\w^*+[\w]^{-1} P_m\;.
\end{align}
For a constant setpoint $P_m=P^*_\ell=\col(P^*_{\ell_i})$, the dynamics \eqref{e:network}
reads as
\begin{equation}\label{e:pnetwork}
\begin{aligned}
\dot \eta&=B^T\w\\
J  \dot \w&=[\w]^{-1}({P_\ell}^*-P_\ell-B\Gamma\sin \eta)-D(\w-\1\w^*)\;.
\end{aligned}
\end{equation}
Note that $\eta(0)=B^T\theta(0)$, and hence $\eta(t)\in\im B^T$ for all $t\geq0$.
Hence, we can restrict the domain of solutions to $(\eta,\w)\in\mathcal X:=\im B^T\times\R^n$, which is clearly forward invariant. 

Note that the choice of the setpoint ${P_\ell}^*$ is decided based on an estimate of the load $P_\ell$. We assume that the maximum mismatch (lack of power) $P_\ell-{P_\ell}^*$ is such that
\begin{equation}\label{e:Delta-multi}
\Delta_N:=\w^{*2}-\frac{4\1^T(P_\ell-{P_\ell}^*)}{\1^TD\1}>0.
\end{equation}
It is easy to see that the condition \eqref{e:Delta-multi} is necessary for the existence of an equilibrium for system \eqref{e:pnetwork}.
Next, we characterize the equilibria of \eqref{e:pnetwork}. 
\begin{lemma}\label{l:existS}
Assume that \eqref{e:Delta-multi} holds. Then the points $(\eta_s, \1\w_s)$ and $(\eta_u, \1\w_u)$ are two equilibria of system \eqref{e:pnetwork} if and only if
 \begin{align}\label{e:barS}
{P_\ell}^*-P_\ell&=B\Gamma \sin \eta_s +D\1\w_{s}(\w_{s}-\w^*),\\
\nonumber
{P_\ell}^*-P_\ell&=B\Gamma \sin\eta_u +D\1\w_{u}(\w_{u}-\w^*),
\end{align}
where
 \begin{align}\label{e:roots}
 & {\w}_s=\half(\w^*+\sqrt{\Delta_N}) ,\quad
 {\w}_u=\half(\w^*-\sqrt{\Delta_N}) 
 \; \text{,} 
 \end{align}
\end{lemma}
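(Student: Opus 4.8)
The plan is to impose the equilibrium conditions $\dot\eta=0$ and $\dot\w=0$ on \eqref{e:pnetwork} and to extract two pieces of information separately: the value of the (necessarily common) frequency and the accompanying power-balance relation. First I would exploit the $\eta$-equation: $\dot\eta=B^T\w=0$ forces $\w\in\ker B^T$. Since the graph $\G(\V,\E)$ is connected, $\ker B^T$ is spanned by $\1$ (equivalently $B^T\1=0$ and $\dim\ker B^T=1$), so every equilibrium has a \emph{synchronized} frequency, $\w=\1\w_0$ for some scalar $\w_0\in\R$. This already produces the structure $\1\w_s$, $\1\w_u$ claimed in the statement; what remains is to identify the admissible values of $\w_0$ together with the constraint they impose on $\eta$.

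Next, substituting $\w=\1\w_0$ into the second equation of \eqref{e:pnetwork} and noting $[\1\w_0]^{-1}=\w_0^{-1}I_n$ (so $\w_0\neq 0$ is needed for the vector field to be defined, which I would verify holds at the computed roots since $\w^*>0$), I would multiply through by $\w_0$ to clear the inverse and rearrange, obtaining exactly the identity ${P_\ell}^*-P_\ell=B\Gamma\sin\eta+D\1\,\w_0(\w_0-\w^*)$, i.e. \eqref{e:barS} with $\w_0$ in place of $\w_s$ or $\w_u$. The converse direction of the ``if and only if'' is then immediate: any $(\eta,\1\w_0)$ satisfying this relation yields $\dot\eta=\w_0 B^T\1=0$ and, by reversing the algebra, $\dot\w=0$.

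The key step is to pin down $\w_0$. Left-multiplying the rearranged identity by $\1^T$ and using $\1^T B=(B^T\1)^T=0$ annihilates the network-coupling term $B\Gamma\sin\eta$, leaving the scalar equation $\1^T({P_\ell}^*-P_\ell)=(\1^T D\1)\,\w_0(\w_0-\w^*)$. Dividing by $\1^T D\1>0$ gives the quadratic $\w_0^2-\w^*\w_0+\1^T(P_\ell-{P_\ell}^*)/(\1^T D\1)=0$, whose discriminant is precisely $\Delta_N$ from \eqref{e:Delta-multi}. Under the standing assumption $\Delta_N>0$ this has the two real roots $\half(\w^*\pm\sqrt{\Delta_N})$, which are exactly $\w_s$ and $\w_u$ in \eqref{e:roots}.

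The main obstacle is minor and essentially bookkeeping: I must confirm that the $\1^T$-projection is the \emph{only} scalar consistency condition (so no further constraint on $\w_0$ is hidden), and that the division by $[\w]$ is legitimate, i.e. the equilibrium frequency is nonzero. The latter follows from $\w^*>0$ and $\Delta_N>0$, since $\w_s>0$ always and, in the lack-of-power regime $\1^T(P_\ell-{P_\ell}^*)>0$, one also obtains $\w_u>0$. Everything else is direct linear algebra built on the two facts $\ker B^T=\mathrm{span}\{\1\}$ and $\1^T B=0$.
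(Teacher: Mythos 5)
Your proposal is correct and follows essentially the same route as the paper: use $\dot\eta=B^T\w=0$ together with connectedness to conclude the equilibrium frequency is synchronized, then premultiply the $\w$-equation by $\1^T$ to annihilate $B\Gamma\sin\eta$ and reduce to the scalar quadratic whose roots are \eqref{e:roots}. Your additional checks (that $[\w]^{-1}$ is well defined at the roots, and the explicit converse direction) are sound details the paper leaves implicit.
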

\begin{proof}
	By the first equality in \eqref{e:pnetwork} it follows that $\w=\1\tilde{\w}$ for some $\tilde{\w}$. By premultiplying the second equality in \eqref{e:pnetwork} by $\1^T$ we obtain that $\1^T({P_\ell}^*-P_\ell)=\1^TD\1\tilde{\w}(\tilde{\w}-\w^*)$ which is a quadratic equation with the roots given by \eqref{e:roots}. 
\end{proof}
The equilibrium of interest here is $(\eta_s, \1\w_s$). In fact, the other equilibrium can be shown to be unstable. Lemma \ref{l:existS} imposes the following assumption:
\begin{assumption}\label{assump:keyS}
For given $P_\ell$ and $P_\ell^\ast$, the inequality \eqref{e:Delta-multi} holds, and there exists $\eta_s\in  \im B^T \cap (-\frac{\pi}{2}, \frac{\pi}{2})^m$ such that \eqref{e:barS} is satisfied.
\end{assumption}
The additional constraint $\eta_s \in (-\frac{\pi}{2}, \frac{\pi}{2})^m$ is needed for stability of the equilibrium and is ubiquitous in the literature, often referred to as the \textit{security constraint} \cite{Hierarchy2016}.
To prove stability of the equilibrium $(\eta_s,\1\w_s)$, we consider first the energy function 
\begin{align}\label{e:VS}
V(x)=\frac12\w^TJ\w-{\w_s}^{-1}\1^T\Gamma \cos\eta\;,
\end{align}
with $x=\col(\eta,\w)$. 
Inspired by \cite{Bregman1967, Bayu2007, Nima2015, Trip2016}, we shift this energy function to
\begin{align}\label{e:WS}
V_s(x)=V(x)-(x-\bar x)^T\nabla V(\bar x)-V(\bar x)\;.
\end{align}
where $\bar x=(\eta_s,\1\w_s)$ and $\nabla V(\bar x)$ is the gradient of $V$ with respect to $x$ evaluated at $\bar x$.
By construction, $V_s$ is positive definite locally if the function $V$ is strictly convex around $\bar x$ \cite{Bregman1967}. By calculating the first and second partial derivatives of $V_s$, it is easy to observe that $V_s$ is strictly convex and takes its minimum at $x=\bar x$, provided that $\eta_s \in  (-\frac{\pi}{2}, \frac{\pi}{2})^m$. 
Now, we are ready to state the main result of this subsection:

\begin{theorem}\label{thm:localasstabS}
	Suppose that Assumption \ref{assump:keyS} holds. 
	Then there exists a neighborhood $\Omega$ of $(\eta_s, \1\w_s)$ such that any solution $(\eta, \w)$ to \eqref{e:pnetwork} that starts in $\Omega$, asymptotically converges to the equilibrium point $(\eta_s, \1\w_s)$.
\end{theorem}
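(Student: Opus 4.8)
The plan is to certify local asymptotic stability through the shifted energy function $V_s$ in \eqref{e:WS}, used as a strict Lyapunov function on the forward-invariant set $\mathcal X=\im B^T\times\R^n$, and then to invoke LaSalle's invariance principle. Positive definiteness of $V_s$ near $\bar x=(\eta_s,\1\w_s)$ has already been established from the strict convexity of $V$ under the security constraint $\eta_s\in(-\frac{\pi}{2},\frac{\pi}{2})^m$, so the whole weight of the argument rests on proving the dissipation inequality $\dot V_s\le 0$ in a neighborhood of $\bar x$ and identifying the set where equality holds.

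First I would differentiate $V_s$ along \eqref{e:pnetwork}, writing $\dot V_s=(\nabla V(x)-\nabla V(\bar x))^T\dot x$ with $\nabla_\eta V=\w_s^{-1}\Gamma\sin\eta$ and $\nabla_\w V=J\w$, and then eliminate the mismatch $P_\ell^*-P_\ell$ by substituting the equilibrium identity \eqref{e:barS}. This produces three groups of terms: the coupling between the $\eta$- and $\w$-dynamics, a term created by the $[\w]^{-1}$ factor acting on the damping, and the raw damping term $-(\w-\1\w_s)^TD(\w-\1\w^*)$. Using $\1^TB=0$ together with the elementary identity $\frac{\w_i}{\w_s}+\frac{\w_s}{\w_i}=2+\frac{(\w_i-\w_s)^2}{\w_s\w_i}$, the coupling terms collapse to $\sum_i\frac{(\w_i-\w_s)^2}{\w_s\w_i}(B\Gamma(\sin\eta-\sin\eta_s))_i$, which is cubic in the deviation from $\bar x$. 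The remaining damping-type terms combine into the quadratic form $-\sum_i D_i\frac{\w_i+\w_s-\w^*}{\w_i}(\w_i-\w_s)^2$.

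The main obstacle is to guarantee that this quadratic form is negative definite, which is precisely where the $[\w]^{-1}$ nonlinearity has to be controlled. Its coefficient at $\w_i=\w_s$ equals $-D_i\frac{2\w_s-\w^*}{\w_s}$, and since $2\w_s-\w^*=\sqrt{\Delta_N}>0$ by \eqref{e:roots} and the standing assumption \eqref{e:Delta-multi}, this coefficient is strictly negative. (Note that at the other root one has $2\w_u-\w^*=-\sqrt{\Delta_N}<0$, which is exactly the mechanism making $\w_u$ unstable.) Hence, on a sufficiently small neighborhood the negative quadratic term dominates the cubic coupling term, yielding $\dot V_s\le 0$ with equality only when $\w=\1\w_s$.

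Finally I would close the argument with LaSalle: on the largest invariant set contained in $\{\w=\1\w_s\}$ one has $\dot\eta=B^T\1\w_s=0$, so $\eta$ is constant, and $\dot\w=0$ then forces $B\Gamma(\sin\eta-\sin\eta_s)=0$ through \eqref{e:barS}. Since $\eta,\eta_s\in\im B^T\cap(-\frac{\pi}{2},\frac{\pi}{2})^m$ and $\Gamma\succ0$, the map $\eta\mapsto B\Gamma\sin\eta$ is injective on that domain, so $\eta=\eta_s$. Thus the only invariant set is $\{\bar x\}$, and every trajectory starting in a suitable sublevel-set neighborhood $\Omega$ of $\bar x$ converges to $(\eta_s,\1\w_s)$.
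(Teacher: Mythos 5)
Your proposal is correct and follows essentially the same route as the paper's proof: the same shifted energy function $V_s$, the same reduction of $\dot V_s$ to a quadratic form in $\w-\1\w_s$ whose coefficients are positive at the equilibrium because $2\w_s-\w^*=\sqrt{\Delta_N}>0$, a compact sublevel set, LaSalle, and the monotonicity of $\sin$ on $(-\frac{\pi}{2},\frac{\pi}{2})^m$ restricted to $\im B^T$ to force $\eta=\eta_s$. The only cosmetic differences are that you write the quadratic form componentwise rather than in matrix form and invoke injectivity of $\eta\mapsto B\Gamma\sin\eta$ where the paper premultiplies by $(\theta-\theta_s)^T$ explicitly; these are the same argument.
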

\begin{proof}
	First, observe that by substituting ${P_\ell}^*-P_{\ell}$ from \eqref{e:barS} the system \eqref{e:pnetwork} can be written as 
	\begin{equation}\label{e:networkcshiftedS}
	\begin{aligned}
	\dot \eta=&B^T(\w-\1\w_s)\\
	J  \dot \w=&-[\w]^{-1}\Big(B\Gamma(\sin \eta-\sin\eta_s))
	\\&+D\big([\w](\w-\1\w^*)-\1\w_s(\w_s- \w^*)\big) \Big)\;.
	\end{aligned}
	\end{equation}
	Consider the Lyapunov function $V_s$ given by \eqref{e:VS}-\eqref{e:WS}. Computing the time derivative of $V_s$ along the solutions of \eqref{e:networkcshiftedS} yields
	\begin{align*}
	\dot {V_s}=&-{(\w-\1\w_s)^T[\w]^{-1}}\Big(B\Gamma(\sin \eta-\sin\eta_s))
	\\&\hspace{2.2cm} +D\big([\w](\w-\1\w^*)-\1\w_s(\w_s- \w^*)\big) \Big)\\
	&+{\w_s}^{-1}\big(\Gamma(\sin \eta-\sin \eta_s)\big)^TB^T(\w-\1\w_s)\\
	=&-(\w-\1\w_s)^T\big({[\w]}^{-1}B\Gamma(\sin\eta-\sin\eta_s)\\& \hspace{2.2cm}-{\w_s}^{-1}B\Gamma(\sin\eta-\sin\eta_s)\big)\\ 
	&-(\w-\1\w_s)^T[\w]^{-1}D[\w+\1\w_s-\1\w^*](\w-\1 \w_s)\;.
	\end{align*}
	Bearing in mind that $\w^*-\w_s=\w_u$, where $\w_u$ is given by \eqref{e:roots}, we have
	\begin{align*}
	\dot V_s=&(\w-\1\w_s)^T([\w]-{\w_s I_n})[\w]^{-1}{\w^*}^{-1}B\Gamma(\sin\eta-\sin\eta_s)\\
	&-(\w-\1\w_s)^TD[\w]^{-1}([\w-\1\w_u])(\w-\1 \w_s)\;.
      \end{align*}
	Hence, we obtain
	\begin{align*}
	\dot{ V_s}=&-(\w-\1\w_s)^T[\w]^{-1}\\
	&\Big(D[\w-\1\w_u]-{\w_s}^{-1}[z(\eta)]\Big)(\w-\1\w_s)
      \end{align*}
	with
	\[
	z(\eta):=B\Gamma(\sin\eta-\sin\eta_s)\;.
	\]
	Since $D>0$, $\w_s>0$, $[\1\w_s-\1\w_u]=\sqrt{\Delta_N}I_n>0$, and $z(\eta_s)=0$, there exists a neighborhood $\Omega^+$ around $(\eta_s,\1\w_s)$ such that 
	\begin{align*}
	[\w-\1\w_u]>0\;,\;\;\;D[\w-\1\w_u]-{\w_s}^{-1} [z(\eta)]>0
	\end{align*}
	for all $(\eta,\w)\in \Omega^+$. 
	Take a (nontrivial) compact level set $\Omega$ of $V_s$ contained in this set, i.e. $\Omega\subset \Omega^+$. Note that such $\Omega$ always exists for sufficiently small $r>0$, $\Omega=\{x \mid x\in \Omega^+ {\text{\;and\;}} V_s(x)\leq r\}$. The compactness follows from positive definiteness of $V_s$.  
	The set $\Omega$ is clearly forward invariant as $\dot{V_s}$ is nonpositive at any point within this set.
	Now, by LaSalle's invariance principle, solutions of the system initialized in $\Omega$ converge to the largest invariant set $\mathcal M$ in $\Omega$ where $\dot V_s= 0$. On this invariant set we have $\w=\1\w_s$.
	By using the second equality in \eqref{e:networkcshiftedS}, we obtain that
	\begin{equation}\label{e:inv-etaS}
	0=B\Gamma (\sin(\eta)-\sin(\eta_s))
	\end{equation}
	on the invariant set. Recall that $\eta, \eta_s \in \im B^T$, namely $\eta=B^T\theta$ and $\eta_s=B^T\theta_s$ for some vectors $\theta$ and 
	$\theta_s$.
	By multiplying \eqref{e:inv-etaS} from the left with $(\theta-\theta_s)^T$, we find that
	\[
	0=(\eta-\eta_s)^T\Gamma ({\sin}(\eta)-{\sin}(\eta_s)).
	\] 
	This results in $\eta=\eta_s$, as the compact level sets are constructed in a neighborhood of $\eta_s\in (-\frac{\pi}{2}, \frac{\pi}{2})^m$, where $\sin(\eta_k)$ is strictly monotone for each $k=1, 2, \ldots, m$. This completes the proof.
\end{proof}

\subsection{Secondary Control}
The primary controller stabilizes the system at the frequency $\w_s$, which in general is not equal to the nominal frequency $\w^*$. In this section, we aim to (optimally) regulate the frequency of the system \eqref{e:network} via the controller \eqref{e:u}, such that a unique equilibrium with $ \w_s=\w^*$ is achieved. Note that $\omega_s=\omega^\ast$ if and only if
\begin{align}\label{e:balance}
\1^T  P_m=\1^T P_\ell\;.
\end{align}
We associate a diagonal matrix $Q=\diag\{q_1,\ldots,q_n\}$ with the power generation costs, where $q_i \in \R_{>0}$ is the cost coefficient of the power generation of the $i$th inverter. Here we seek for an optimal resource allocation such that the control signal $ P_m=\col( P_{m_i})$ minimizes the quadratic cost function
\begin{align}\label{e:cost}
C(P_m)=\frac{1}{2}  P_m^TQ P_m \; \text{,}
\end{align}
subject to the power balance constraint given by \eqref{e:balance}.
Following the standard Lagrange multipliers method, the optimal control $P_m^\star$ that minimizes \eqref{e:cost} is computed as
\begin{align}\label{e:Opt}
P_m^\star&=\frac{Q^{-1}\1\1^T P_\ell}{\1^TQ^{-1}\1} \;.
\end{align}
An immediate consequence of the above is that the load is proportionally shared among the inverters, i.e, 
\begin{align}\label{e:pp}
(P^*_m)_iQ_i=(P^*_m)_jQ_j\; ,
\end{align}
for all $i, j\in \V$. To achieve the optimal cost, and inspired by \cite{Trip2016,Hierarchy2016, SimpsonPorco2013,Andreasson2014}, we propose the controller given by
\begin{equation}
\begin{aligned}
  \dot \xi&=-\la\xi-Q^{-1}[\w]^{-1}(\w-\1\w^*)\\
  P_m&=Q^{-1}\xi
  \;,
\end{aligned}\label{eq:controlleri}
\end{equation}
where $\la$ is the Laplacian matrix of an undirected connected communication graph. The term $-Q^{-1}[\w]^{-1}(\w-\1\w^*)$ regulates the frequency to the nominal frequency, while the consensus based algorithm $-\la\xi$ aims at steering the input to the optimal one given by \eqref{e:Opt}.
Having \eqref{e:network}-\eqref{e:u}, and \eqref{eq:controlleri}, the overall system reads as 
\begin{equation}\label{e:networkc}
\begin{aligned}
 \dot \eta&=B^T\w\\
J  \dot \w&=[\w]^{-1}(Q^{-1}\xi-P_\ell-B\Gamma\sin \eta)-D(\w-\1 \w^*)\\
  \dot \xi&=-\la\xi-Q^{-1}[\w]^{-1}(\w-\1\w^*)\;.
\end{aligned}
\end{equation}
Note that $\eta(0)=B^T\theta(0)$, and hence $\eta(t)\in\im B^T$ for all $t\geq0$.
Consequently, we can restrict the domain of solutions of \eqref{e:networkc} to $(\eta,\w,\xi)\in\mathcal X:=\im B^T\times\R^n\times \R^n$ which is clearly forward invariant. Next, we characterize the equilibrium of the above system. 
\begin{lemma}\label{l:exist}
 The point $(\bar\eta,\bar \w,\bar\xi)\in\mathcal X$ is an  equilibrium of \eqref{e:networkc} if and only if it satisfies
  \begin{equation}\label{eq:multiequilibria}
  \begin{aligned}
    &Q^{-1}\bar\xi-P_\ell-B\Gamma\sin \bar \eta=0\\
    &\bar \w=\1\w^*, \qquad \bar \xi=\frac{\1\1^TP_\ell}{\1^TQ^{-1}\1} \;.
  \end{aligned}
\end{equation}
\end{lemma}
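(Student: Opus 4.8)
The plan is to prove the characterization by setting each of the three right-hand sides of \eqref{e:networkc} to zero and extracting the stated conditions through a short chain of algebraic reductions that exploit the kernel structures of the incidence matrix $B$ and the Laplacian $\la$. The reverse implication is immediate: substituting the proposed $(\bar\eta,\bar\w,\bar\xi)$ into the dynamics and using $\bar\w=\1\w^*$ makes all three right-hand sides vanish, so I would dispatch it in a single line. The substance lies in the forward direction.

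First I would use the vanishing of the first equation, $B^T\bar\w=0$. Because the physical graph $\G$ is connected, $\ker B^T=\text{span}\{\1\}$, so $\bar\w=\1\tilde\w$ for some scalar $\tilde\w>0$. Next, to pin down $\tilde\w$, I would premultiply the third equilibrium equation $\la\bar\xi+Q^{-1}[\bar\w]^{-1}(\bar\w-\1\w^*)=0$ by $\1^T$. Since $\la$ is the Laplacian of a connected communication graph, $\1^T\la=0$, which annihilates the consensus term and leaves $\tilde\w^{-1}(\tilde\w-\w^*)\,\1^TQ^{-1}\1=0$. As $\1^TQ^{-1}\1>0$ and $\tilde\w>0$, this forces $\tilde\w=\w^*$, hence $\bar\w=\1\w^*$.

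With $\bar\w=\1\w^*$ in hand, the remaining equations simplify. Substituting it into the second equilibrium equation kills the damping term $D(\bar\w-\1\w^*)$, and since $[\bar\w]^{-1}=(\w^*)^{-1}I_n$ is invertible, this yields exactly the first line of \eqref{eq:multiequilibria}, namely $Q^{-1}\bar\xi-P_\ell-B\Gamma\sin\bar\eta=0$. Substituting $\bar\w=\1\w^*$ into the third equation collapses it to $\la\bar\xi=0$, so by connectedness of the communication graph $\bar\xi=\1 c$ for some scalar $c$. To determine $c$, I would premultiply the first line of \eqref{eq:multiequilibria} by $\1^T$ and invoke the defining property $\1^T B=0$ of an incidence matrix, which eliminates the coupling term $\1^T B\Gamma\sin\bar\eta$. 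This reduces to $\1^TQ^{-1}\bar\xi=\1^T P_\ell$; inserting $\bar\xi=\1 c$ gives $c=\1^T P_\ell/(\1^TQ^{-1}\1)$, and therefore $\bar\xi=\1\1^T P_\ell/(\1^TQ^{-1}\1)$, matching the claimed expression.

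The argument is essentially a sequence of linear-algebraic reductions, so I expect no deep obstacle; the one step demanding care is establishing $\bar\w=\1\w^*$, where the correct move---premultiplying by $\1^T$ to exploit $\1^T\la=0$ rather than attempting to solve the full nonlinear equation---is precisely what pins the common frequency to its nominal value. I would also flag explicitly that dividing through by $\tilde\w$ presumes a strictly positive frequency, which is physically natural and consistent with the single-inverter equilibrium analysis of Section \ref{s:SingleICI}.
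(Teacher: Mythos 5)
Your proposal is correct and follows essentially the same route as the paper's proof: use $B^T\bar\w=0$ and connectedness to get $\bar\w=\1\tilde\w$, premultiply the third equilibrium equation by $\1^T$ to kill the Laplacian term and force $\tilde\w=\w^*$, deduce $\bar\xi\in\ker\la=\mathrm{span}\{\1\}$, and premultiply the second equation by $\1^T$ to fix the scalar. The only addition is your explicit remark that dividing by $\tilde\w$ requires it to be nonzero, which the paper leaves implicit.
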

\begin{proof}
  By the first equality in \eqref{e:networkc} it follows that $\w=\1\tilde{\w}$ for some $\tilde{\w}$. By premultiplying the third equality in \eqref{e:networkc} by $\1^T$ we obtain that $\1^TQ^{-1}[\w]^{-1}(\tilde{\w}-\w^*)=0$ which implies that $\tilde{\w}=\w^*$. In addition, $\bar \xi=\1 \tilde{\xi}$  for some $\tilde{\xi}\in\mathbb R$. Again, by premultiplying the second equation by $\1^T$ we obtain that $\1^TQ^{-1}\1\tilde{\xi}-\1^TP_\ell=0$ implying $\bar \xi=\frac{\1\1^TP_\ell}{\1^TQ^{-1}\1}$. 
\end{proof}

\medskip{}

Lemma \ref{l:exist} imposes the following assumption:
\begin{assumption}\label{assump:key}
For given $P_\ell$, there exists $\bar\eta\in  \im B^T \cap (-\frac{\pi}{2}, \frac{\pi}{2})^m$ such that
\[
\left(\frac{Q^{-1}\1\1^T}{\1^TQ^{-1}\1}-I_n\right)P_\ell-B\Gamma\sin \bar \eta=0\;.
\]
\end{assumption}
\vspace{0.3cm}
To prove frequency regulation, we exploit the energy function 
\begin{align}\label{e:V}
  W(x)=\frac12\w^TJ\w-{\w^*}^{-1}\1^T\Gamma \cos\eta+\frac12\xi^T\xi\;,
\end{align}
with $x=\col(\eta,\w,\xi)$. Note that the only difference with \eqref{e:VS} is the addition of the quadratic term associated with the states of the controller.
For the analysis, as before, we use the shifted version 
\begin{align}\label{e:W}
  W_s(x)=W(x)-(x-\bar x)^T\nabla W(\bar x)-W(\bar x)\;.
\end{align}
where $\bar x=(\bar\eta,\bar\w,\bar\xi)$. 
Noting that $\bar\eta\in   (-\frac{\pi}{2}, \frac{\pi}{2})^m$, it is easily verified that $W_s$ is positive definite around its local minimum $x=\bar x$.
\rem{By construction, $W_s$ is positive definite if the function $W$ is strictly convex \cite{Bregman1967}. Again it is easy to verify that $W_s$ takes it minimum at $x=\bar x$, providing that $\bar \eta \in  (-\frac{\pi}{2}, \frac{\pi}{2})^m$.  }
Now, we have the following result.
\begin{theorem}\label{thm:localasstab}
	Suppose that Assumption \ref{assump:key} holds. 
	Then there exists a neighborhood $\Omega$ of $(\bar\eta, \bar\omega, \bar\xi)$ such that any solution $(\eta, \omega, \xi)$ to \eqref{e:networkc} that starts in $\Omega$, asymptotically converges to the equilibrium point $(\bar\eta, \bar\omega, \bar\xi)$.
	Moreover, the vector $P_m$ converges to the optimal power injection $P_m^\star$ given by \eqref{e:Opt}.   
\end{theorem}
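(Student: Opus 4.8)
The plan is to mirror the Lyapunov/LaSalle argument of Theorem \ref{thm:localasstabS}, now carrying along the extra controller state $\xi$. First I would use the equilibrium identity $P_\ell=Q^{-1}\bar\xi-B\Gamma\sin\bar\eta$ from Lemma \ref{l:exist} to rewrite \eqref{e:networkc} in terms of the shifted quantities $\w-\1\w^*$, $\xi-\bar\xi$, and $z(\eta):=B\Gamma(\sin\eta-\sin\bar\eta)$, which satisfies $z(\bar\eta)=0$. Using the shifted energy $W_s$ in \eqref{e:W}, whose gradient difference $\nabla W(x)-\nabla W(\bar x)$ has components ${\w^*}^{-1}\Gamma(\sin\eta-\sin\bar\eta)$, $J(\w-\1\w^*)$, and $\xi-\bar\xi$, I would evaluate $\dot W_s=(\nabla W(x)-\nabla W(\bar x))^T\dot x$ along the solutions.

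The key algebraic observation is that the two cross terms generated by the controller, namely $(\w-\1\w^*)^T[\w]^{-1}Q^{-1}(\xi-\bar\xi)$ from the $\w$-equation and $-(\xi-\bar\xi)^TQ^{-1}[\w]^{-1}(\w-\1\w^*)$ from the $\xi$-equation, cancel exactly because $Q^{-1}[\w]^{-1}$ is symmetric. Invoking $\1^TB=0$ (hence $\1^Tz(\eta)=0$) and $\la\bar\xi=0$ (since $\bar\xi\in\mathrm{span}\{\1\}$ and $\la\1=0$), the derivative collapses to
\begin{equation*}
\dot W_s=-(\w-\1\w^*)^T[\w]^{-1}\big(D[\w]-{\w^*}^{-1}[z(\eta)]\big)(\w-\1\w^*)-(\xi-\bar\xi)^T\la(\xi-\bar\xi).
\end{equation*}
Since $\w\approx\1\w^*>0$ and $z(\bar\eta)=0$ near $\bar x$, the matrix $D[\w]-{\w^*}^{-1}[z(\eta)]$ is positive definite and $[\w]^{-1}>0$, while $\la\succeq0$; thus $\dot W_s\le0$ on a neighborhood $\Omega^+$ of $\bar x$. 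As in Theorem \ref{thm:localasstabS} I would then take a compact sublevel set $\Omega\subset\Omega^+$ of the locally positive definite $W_s$, which is forward invariant, and apply LaSalle's invariance principle.

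The main obstacle is characterizing the largest invariant set $\mathcal M\subseteq\{x\in\Omega:\dot W_s=0\}$. The first quadratic form forces $\w=\1\w^*$, but the second only gives $\xi-\bar\xi\in\ker\la=\mathrm{span}\{\1\}$ (the communication graph is connected), i.e. $\xi=\bar\xi+\1 c$; so $\xi=\bar\xi$ is \emph{not} immediate, and this is precisely the step requiring care. To close the gap I would use the dynamics on $\mathcal M$: with $\w=\1\w^*$ the $\w$-equation yields $Q^{-1}\xi-P_\ell-B\Gamma\sin\eta=0$, and subtracting the first equilibrium relation in \eqref{eq:multiequilibria} gives $z(\eta)=cQ^{-1}\1$; premultiplying by $\1^T$ and using $\1^Tz(\eta)=0$ together with $\1^TQ^{-1}\1>0$ forces $c=0$, whence $\xi=\bar\xi$ and $z(\eta)=0$.

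Finally, writing $\eta=B^T\theta$ and $\bar\eta=B^T\theta_s$ and multiplying $z(\eta)=0$ on the left by $(\theta-\theta_s)^T$ gives $(\eta-\bar\eta)^T\Gamma(\sin\eta-\sin\bar\eta)=0$, so strict monotonicity of $\sin$ on $(-\frac{\pi}{2},\frac{\pi}{2})$ (guaranteed by Assumption \ref{assump:key}) yields $\eta=\bar\eta$. Hence $\mathcal M=\{\bar x\}$ and every solution starting in $\Omega$ converges to $(\bar\eta,\bar\w,\bar\xi)$. The optimality claim is then immediate: $P_m=Q^{-1}\xi\to Q^{-1}\bar\xi=\frac{Q^{-1}\1\1^TP_\ell}{\1^TQ^{-1}\1}=P_m^\star$ by \eqref{eq:multiequilibria} and \eqref{e:Opt}.
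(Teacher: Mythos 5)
Your proposal is correct and takes essentially the same route as the paper's proof: the same shifted energy function $W_s$, the same dissipation identity (your form $[\w]^{-1}\big(D[\w]-{\w^*}^{-1}[z(\eta)]\big)$ coincides with the paper's $D-{\w^*}^{-1}[\w]^{-1}[z(\eta)]$ since all factors are diagonal), and the same LaSalle argument, including the key step of premultiplying the $\w$-equation on the invariant set by $\1^T$ to rule out $\xi=\bar\xi+\1 c$ with $c\neq 0$. No gaps; the argument matches the paper's.
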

\begin{proof}
	First, observe that by substituting $P_{\ell}$ from \eqref{eq:multiequilibria} the system \eqref{e:networkc} can be written as 
	\begin{equation}\label{e:networkcshifted}
	\begin{aligned}
	\dot \eta=&B^T(\w-\1\w^*)\\
	J  \dot \w=&[\w]^{-1}(Q^{-1}(\xi-\bar \xi)
	\\&-B\Gamma(\sin \eta-\sin\bar\eta))-D(\w-\1 \w^*)\\
	\dot \xi=&-\la(\xi-\bar \xi)-Q^{-1}[\w]^{-1}(\w-\1\w^*)
	\end{aligned}
	\end{equation}
	Analogous to the proof of Theorem \ref{thm:localasstabS}, the time derivative of $W_s$ given by \eqref{e:V}-\eqref{e:W} along the solutions of \eqref{e:networkcshifted} is computed as
    \begin{align*}
	\dot{ W_s}=&-(\xi-\bar \xi)^T\la(\xi-\bar \xi)
	\\&-(\w-\1\w^*)^T\Big(D-{\w^*}^{-1}[\w]^{-1}[z(\eta)]\Big)(\w-\1\w^*)
	\end{align*}
	with
	\[
	z(\eta)=B\Gamma(\sin\eta-\sin\bar\eta)\;.
	\]
	Since $D>0$, $\w^*>0$, and $z(\bar\eta)=0$, there exists a neighborhood $\Omega^+$ around $(\bar \eta,\bar \w,\bar\xi)$ such that 
	\begin{align*}
	D-{\w^*}^{-1} [\w]^{-1}[z(\eta)]>0
	\end{align*}
	for all $(\eta,\w,\xi)\in \Omega^+$. 
	Take a (nontrivial) compact level set $\Omega$ of $W_s$ contained in this set, i.e. $\Omega\subset \Omega^+$. Again note that such $\Omega$ always exists for sufficiently small $r$, $\Omega=\{x \mid x\in \Omega^+ {\text{\;and\;}} W_s(x)\leq r\}$. 
	Noting that $\Omega$ is forward invariant, 
	by LaSalle's invariance principle, solutions of the system initialized in $\Omega$ converge to the largest invariant set $\mathcal M$ in $\Omega$ with $\dot W_s= 0$. On this invariant set we have $\w=\1\w^*$, and $\la\xi=0$ implying that $\xi=\bar \xi+\alpha \1$ for some $\alpha\in \R$.
	By premultiplying the second equality in \eqref{e:networkcshifted} with $\1^T$, on the invariant set we have
	$
	0=\1^TQ^{-1}(\bar \xi+ \alpha \1 - \bar \xi)
	$,  which yields $\alpha=0$ and thus $\xi=\bar \xi$. This means that $P_m=Q^{-1}\bar \xi$ on the invariant set, which coincides with the expression of optimal power injection $P_m^\star$ given by \eqref{e:Opt}, noting the last equality in \eqref{eq:multiequilibria}.
	Finally, by using an analogous argument to the proof of Theorem \ref{thm:localasstabS}, we conclude that $\eta=\overline \eta$ on the invariant set, which completes the proof.
\end{proof}

\begin{remark}\label{r:Network2Single}
	We can treat a single ICI modeled by \eqref{e:model}-\eqref{e:model0c} as the special case of the network modeled by \eqref{e:networkc} with $\la=0$, $n=1$, $Q=1$, and $\Gamma=0$. Hence the controller regulates the frequency to its nominal value also in the case of a single ICI connected to a constant load.
\end{remark}
\section{Numerical Example}
We illustrate the results by a numerical example
of a power network consisting of five ICIs. The interconnection topology (solid lines) and the communication graph (dashed lines) are shown in Figure \ref{f:numerical}. The reactance of the lines are depicted along the edges. The inverter setpoints and other network parameters are chosen as shown in Table \ref{t:param}.
\begin{figure}
	\[\begin{tikzpicture}[x=1.2cm, y=0.8cm,
	every edge/.style={sloped, draw, line width=1.2pt}]
	\vertex (v1) at (2,1)  {\tiny\textbf{ ICI\textsubscript{2}}};
	\vertex (v2) at (4,-1) {\tiny \textbf{ICI\textsubscript{4}}};
	\vertex (v3) at (2,-1) {\tiny \textbf{ICI\textsubscript{5}}};
	\vertex (v5) at (1,0) {\tiny \textbf{ICI\textsubscript{1}}};
	\vertex (v6) at (4,1) {\tiny \textbf{ICI\textsubscript{3}}};
	\tikzstyle{vertex}=[shading = ball, ball color = white!100!white, minimum size=12pt, draw, inner sep=0pt]
	\path
	(v1) edge node[anchor=south]{\weight{\rm 0.08}}(v5)
	(v1) edge node[anchor=south]{\weight{\rm 0.15}}(v6)
	(v2) edge node[anchor=north]{\weight{\rm 0.13}}(v3)
	(v2) edge node[anchor=south]{\weight{\rm 0.08}}(v6)
	(v3) edge node[anchor=north]{\weight{\rm 0.10}}(v5)
	(v1) edge[dashed] node[left]{}(v3)
	(v2) edge[dashed, bend right,in=210,out=330] node[left]{}(v3)
	(v3) edge[dashed, bend right,in=210,out=330] node[left]{}(v5)
	(v2) edge[dashed, bend right,in=210,out=330] node[right]{}(v6)
	;
	\end{tikzpicture}\]
	\caption{The solid lines denote the power lines in $\G$, and the dashed lines depict the communication links with the Laplacian $\la$. The values over the edges are the reactance of the lines.}\label{f:numerical}
\end{figure}
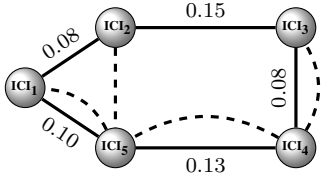
\begin{table}
	\centering
	\caption{Simulation Parameters}
	\label{t:param}
	\begin{tabular}{cccccc}
		\toprule
		& ICI\textsubscript{1}  & ICI\textsubscript{2}  & ICI\textsubscript{3} & ICI\textsubscript{4} & ICI\textsubscript{5} \\
		\midrule
		${C_{\dc}}_i(\si{\milli\farad})$ &1.0  & 1.2 & 1.1 &2.5&4.4\\[0.15cm]
		${G_{\dc}}_i(\si{\mho})$ & 0.10 & 0.09 & 0.12 & 0.12 & 0.18\\[0.15cm]
		$q_i(\$/\si{\kilo\watt\squared \hour})$ & 0.056&  0.028 &0.019&0.014&0.011\\[0.15cm]
		${P_\ell}_i(\si{\kilo\watt})$ & 10& 12.5&13.5 &16 &25\\[0.15cm]
		$|V_i|(\si{\volt})$ &300.7  &298.8 &299.7 & 301.0 &300.3\\[0.15cm]
		$v^*_{\dc}(\si{\kilo\volt})$ & 1.0 & 0.9 & 0.8& 1.2 & 1.5 \\[0.1cm]
		\bottomrule
	\end{tabular}
\end{table}
The system is initially at steady-state with the constant power loads $P_{\ell_i}$. At time $t=0$, loads $P_{\ell_1}$, $P_{\ell_3}$, and $P_{\ell_5}$ are increased by $10$ percent of their original values. The frequency evolution and
the active power injections are depicted in Figure \ref{fig:numerical}. It is observed that the system regulates the frequency
to its nominal value $\SI{50}{\hertz}$. Note that the frequencies at the various nodes are so similar to each other that no difference can be noticed in the plot. The system shows a safe maximum rate of change of frequency $\rm{ROCOF}_{\max}=\SI{0.3}{\hertz/\s}$ (ENTSOE standard threshold for the maximum $\rm ROCOF$ is $\SI{1}{\hertz/\s}$ \cite{ENTSOE}), which can be diminished further using larger or parallel capacitors. Finally, observe that the load is shared among the sources with the ratios of $\{q^{-1}_1,\cdots,q^{-1}_5\}$, which is in agreement with the proportional power sharing \eqref{e:pp}.
\begin{figure}
\centering
\includegraphics[width=8.7cm]{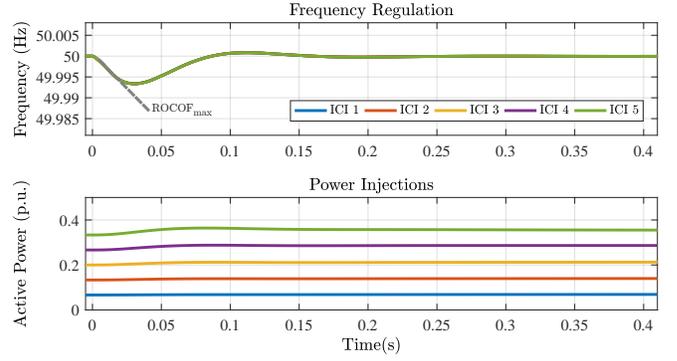}
\caption{Frequency regulation and optimal power injection after a step change in the local loads connected to the nodes $1$, $3$, and $5$.}
\label{fig:numerical}
\end{figure}

\section{Conclusion}
In this paper a network of inverters with a capacitor emulating inertia was investigated in two cases. First, the case of a single inverter connected to a load, and second, a network of inverters with local loads. A control method including primary and secondary controllers was proposed, and it was shown that the stability and frequency regulation are guaranteed under the proposed controllers. Future work includes the control of the reactive power, considering filter dynamics of the ICI \cite{Jouini2016}, time-domain analysis of the network, and extending the proposed results to structure-preserving and differential algebraic models \cite{BH, Varaiya1985,Nima2016_1,Nima2017_1}.
\bibliographystyle{IEEEtran}
\bibliography{ref,mybib}

\begin{thebibliography}{10}
\providecommand{\url}[1]{#1}
\csname url@samestyle\endcsname
\providecommand{\newblock}{\relax}
\providecommand{\bibinfo}[2]{#2}
\providecommand{\BIBentrySTDinterwordspacing}{\spaceskip=0pt\relax}
\providecommand{\BIBentryALTinterwordstretchfactor}{4}
\providecommand{\BIBentryALTinterwordspacing}{\spaceskip=\fontdimen2\font plus
\BIBentryALTinterwordstretchfactor\fontdimen3\font minus
  \fontdimen4\font\relax}
\providecommand{\BIBforeignlanguage}[2]{{%
\expandafter\ifx\csname l@#1\endcsname\relax
\typeout{** WARNING: IEEEtran.bst: No hyphenation pattern has been}%
\typeout{** loaded for the language `#1'. Using the pattern for}%
\typeout{** the default language instead.}%
\else
\language=\csname l@#1\endcsname
\fi
#2}}
\providecommand{\BIBdecl}{\relax}
\BIBdecl

\bibitem{Tielens2016}
P.~Tielens and D.~V. Hertem, ``The relevance of inertia in power systems,''
  \emph{Renewable and Sustainable Energy Reviews}, vol.~55, pp. 999 -- 1009,
  2016.

\bibitem{Ulbig2014}
A.~Ulbig, T.~S. Borsche, and G.~Andersson, ``Impact of low rotational inertia
  on power system stability and operation,'' \emph{IFAC Proceedings Volumes},
  vol.~47, no.~3, pp. 7290--7297, 2014.

\bibitem{Bevrani2014}
H.~Bevrani, T.~Ise, and Y.~Miura, ``Virtual synchronous generators: A survey
  and new perspectives,'' \emph{International Journal of Electrical Power \&
  Energy Systems}, vol.~54, pp. 244--254, 2014.

\bibitem{Dreidy2017}
M.~Dreidy, H.~Mokhlis, and S.~Mekhilef, ``Inertia response and frequency
  control techniques for renewable energy sources: A review,'' \emph{Renewable
  and Sustainable Energy Reviews}, vol.~69, pp. 144 -- 155, 2017.

\bibitem{vi-4}
J.~Ekanayake and N.~Jenkins, ``Comparison of the response of doubly fed and
  fixed-speed induction generator wind turbines to changes in network
  frequency,'' \emph{IEEE Transactions on Energy Conversion}, vol.~19, no.~4,
  pp. 800--802, 2004.

\bibitem{vi-3}
J.~Morren, J.~Pierik, and S.~W. de~Haan, ``Inertial response of variable speed
  wind turbines,'' \emph{Electric Power Systems Research}, vol.~76, no.~11, pp.
  980 -- 987, 2006.

\bibitem{vi-2}
H.~P. Beck and R.~Hesse, ``Virtual synchronous machine,'' in \emph{9th
  International Conference on Electrical Power Quality and Utilisation}, 2007,
  pp. 1--6.

\bibitem{vi-1}
M.~P.~N. van Wesenbeeck, S.~W.~H. de~Haan, P.~Varela, and K.~Visscher, ``Grid
  tied converter with virtual kinetic storage,'' in \emph{IEEE Bucharest
  PowerTech}, 2009, pp. 1--7.

\bibitem{vi0}
T.~V. Van, K.~Visscher, J.~Diaz, V.~Karapanos, A.~Woyte, M.~Albu, J.~Bozelie,
  T.~Loix, and D.~Federenciuc, ``Virtual synchronous generator: An element of
  future grids,'' in \emph{IEEE PES Innovative Smart Grid Technologies
  Conference Europe (ISGT Europe)}, 2010, pp. 1--7.

\bibitem{vi1}
Q.~C. Zhong and G.~Weiss, ``Synchronverters: Inverters that mimic synchronous
  generators,'' \emph{IEEE Transactions on Industrial Electronics}, vol.~58,
  no.~4, pp. 1259--1267, 2011.

\bibitem{vi2}
N.~Soni, S.~Doolla, and M.~C. Chandorkar, ``Improvement of transient response
  in microgrids using virtual inertia,'' \emph{IEEE Transactions on Power
  Delivery}, vol.~28, no.~3, pp. 1830--1838, 2013.

\bibitem{vi3}
T.~Shintai, Y.~Miura, and T.~Ise, ``Oscillation damping of a distributed
  generator using a virtual synchronous generator,'' \emph{IEEE Transactions on
  Power Delivery}, vol.~29, no.~2, pp. 668--676, 2014.

\bibitem{vi4}
J.~Alipoor, Y.~Miura, and T.~Ise, ``Power system stabilization using virtual
  synchronous generator with alternating moment of inertia,'' \emph{IEEE
  Journal of Emerging and Selected Topics in Power Electronics}, vol.~3, no.~2,
  pp. 451--458, 2015.

\bibitem{Vandoorn2011}
T.~L. Vandoorn, B.~Meersman, L.~Degroote, B.~Renders, and L.~Vandevelde, ``A
  control strategy for islanded microgrids with {DC}-link voltage control,''
  \emph{IEEE Transactions on Power Delivery}, vol.~26, no.~2, pp. 703--713,
  2011.

\bibitem{Vandoorn2012}
T.~L. Vandoorn, B.~Meersman, J.~D. M.~D. Kooning, and L.~Vandevelde, ``Analogy
  between conventional grid control and islanded microgrid control based on a
  global {DC}-link voltage droop,'' \emph{IEEE Transactions on Power Delivery},
  vol.~27, no.~3, pp. 1405--1414, 2012.

\bibitem{Arani2013}
M.~F.~M. Arani and E.~F. El-Saadany, ``Implementing virtual inertia in
  {DFIG}-based wind power generation,'' \emph{IEEE Transactions on Power
  Systems}, vol.~28, no.~2, pp. 1373--1384, 2013.

\bibitem{Jouini2016}
T.~Jouini, C.~Arghir, and F.~D{\"o}rfler, ``Grid-friendly matching of
  synchronous machines by tapping into the {DC} storage,''
  \emph{IFAC-PapersOnLine}, vol.~49, no.~22, pp. 192--197, 2016.

\bibitem{Zhou2009}
J.~Zhou and Y.~Ohsawa, ``Improved swing equation and its properties in
  synchronous generators,'' \emph{Circuits and Systems I: Regular Papers, IEEE
  Transactions on}, vol.~56, no.~1, pp. 200--209, 2009.

\bibitem{Pooya2016}
P.~Monshizadeh, C.~De~Persis, N.~Monshizadeh, and A.~J. van~der Schaft,
  ``Nonlinear analysis of an improved swing equation,'' in \emph{IEEE 55th
  Conference on Decision and Control (CDC)}, 2016, pp. 4116--4121.

\bibitem{Schiffer2014}
J.~Schiffer, R.~Ortega, A.~Astolfi, J.~Raisch, and T.~Sezi, ``Conditions for
  stability of droop-controlled inverter-based microgrids,'' \emph{Automatica},
  vol.~50, no.~10, pp. 2457 -- 2469, 2014.

\bibitem{Pooya2017}
P.~Monshizadeh, N.~Monshizadeh, C.~De~Persis, and A.~van~der Schaft, ``Output
  impedance diffusion into lossy power lines,'' \emph{arXiv preprint
  arXiv:1702.01488}, 2017.

\bibitem{Hierarchy2016}
F.~D\"{o}rfler, J.~W. Simpson-Porco, and F.~Bullo, ``Breaking the hierarchy:
  Distributed control and economic optimality in microgrids,'' \emph{IEEE
  Transactions on Control of Network Systems}, vol.~3, no.~3, pp. 241--253,
  2016.

\bibitem{Bregman1967}
L.~Bregman, ``The relaxation method of finding the common point of convex sets
  and its application to the solution of problems in convex programming,''
  \emph{USSR Computational Mathematics and Mathematical Physics}, vol.~7,
  no.~3, pp. 200 -- 217, 1967.

\bibitem{Bayu2007}
B.~Jayawardhana, R.~Ortega, E.~Garcia-Canseco, and F.~Castanos, ``Passivity of
  nonlinear incremental systems: Application to {PI} stabilization of nonlinear
  {RLC} circuits,'' \emph{Systems \& control letters}, vol.~56, no.~9, pp.
  618--622, 2007.

\bibitem{Nima2015}
C.~De~Persis and N.~Monshizadeh, ``Bregman storage functions for microgrid
  control,'' \emph{IEEE Transactions on Automatic Control, provisionally
  accepted}, 2015.

\bibitem{Trip2016}
S.~Trip, M.~B{\"u}rger, and C.~De~Persis, ``An internal model approach to
  (optimal) frequency regulation in power grids with time-varying voltages,''
  \emph{Automatica}, vol.~64, pp. 240--253, 2016.

\bibitem{SimpsonPorco2013}
J.~W. Simpson-Porco, F.~D\"{o}rfler, and F.~Bullo, ``Synchronization and power
  sharing for droop-controlled inverters in islanded microgrids,''
  \emph{Automatica}, vol.~49, no.~9, pp. 2603 -- 2611, 2013.

\bibitem{Andreasson2014}
M.~Andreasson, D.~V. Dimarogonas, H.~Sandberg, and K.~H. Johansson,
  ``Distributed {PI}-control with applications to power systems frequency
  control,'' in \emph{IEEE American Control Conference}, 2014, pp. 3183--3188.

\bibitem{ENTSOE}
{European Network of Transmission System Operators for Electricity (ENTSOE)},
  ``{Frequency Stability Evaluation Criteria for the Synchronous Zone of
  Continental {E}urope - Requirements and impacting factors},''
  \emph{{Distribution System Analysis Subcommittee}}, 2016.

\bibitem{BH}
A.~R. Bergen and D.~J. Hill, ``A structure preserving model for power system
  stability analysis,'' \emph{IEEE Transactions on Power Apparatus and
  Systems}, vol. PAS-100, no.~1, pp. 25--35, 1981.

\bibitem{Varaiya1985}
N.~Tsolas, A.~Arapostathis, and P.~Varaiya, ``A structure preserving energy
  function for power system transient stability analysis,'' \emph{IEEE
  Transactions on Circuits and Systems}, vol.~32, no.~10, pp. 1041--1049, 1985.

\bibitem{Nima2016_1}
C.~De~Persis, N.~Monshizadeh, J.~Schiffer, and F.~D\"{o}rfler, ``A {Lyapunov}
  approach to control of microgrids with a network-preserved
  differential-algebraic model,'' in \emph{IEEE 55th Conference on Decision and
  Control (CDC)}, 2016, pp. 2595--2600.

\bibitem{Nima2017_1}
N.~Monshizadeh and C.~D. Persis, ``Agreeing in networks: Unmatched
  disturbances, algebraic constraints and optimality,'' \emph{Automatica},
  vol.~75, pp. 63 -- 74, 2017.

\end{thebibliography}
\IEEEpeerreviewmaketitle
\end{document}